\newtheorem{lem}{Lemma}
\newtheorem{thm}{Theorem}
\def\<{\leqslant}           
\def\>{\geqslant}           
\def\d{\partial}
\def\Re{{\rm Re}}   
\def\mR{{\mathbb R}}    
\def\rT{\mathrm{T}}        
\def\diam{\mathrm{diam}}       
\def\bS{\mathbf{S}}
\def\[[[{[\![\![}
\def\]]]{]\!]\!]}
\def\re{{\rm e}}        
\def\rd{{\rm d}}        
\def\br{{\bf r}}
\def\x{\times}
\def\fR{{\mathfrak R}}
\def\mS{{\mathbb S}}
\def\argmin{\mathop\mathrm{argmin}}
\def\esssup{\mathop\mathrm{ess\, sup}}
\def\clos{\mathrm{clos\, }}
\def\interior{\mathrm{int}}
\def\eps{\epsilon}
\title{\LARGE \bf
Convergence to periodic regimes in nonlinear feedback systems with a strongly convex backlash$^*$}
\author{Igor G. Vladimirov$^\dagger$,
\qquad
Ian R. Petersen$^\dagger$%
\thanks{$^*$This work is supported by
the Australian Research Council under grant DP180101805.}
\thanks{$^\dagger$Research School of Electrical, Energy and Materials Engineering, College of Engineering and Computer Science,
Australian National University, Canberra, Acton, ACT 2601,
Australia, {\tt\scriptsize
igor.g.vladimirov@gmail.com, i.r.petersen@gmail.com.}
}
}
\begin{document}

\maketitle

\thispagestyle{empty}

\begin{abstract}
This paper considers a class of nonlinear systems consisting of a linear part with an external input and a nonlinear feedback with a backlash. Assuming that the latter is specified by a strongly convex set, we establish estimates for the Lyapunov exponents which quantify the rate of convergence of the system trajectories to a forced periodic regime when the input is a periodic function of time. These results employ enhanced  dissipation inequalities for differential inclusions with strongly convex sets, which were used previously for the Moreau sweeping process.
\end{abstract}

\section{\bf Introduction}
\label{sec:intro}

Existence and uniqueness of periodic regimes and the rate of convergence to them are well-known for stable linear systems subject to periodic inputs. Such systems may result from a stabilising linear feedback applied to an unstable linear plant. If the feedback is implemented using mechanical elements (for example, gears, levers, spring-damper units, or inerters \cite{S_2002}), it can be susceptible to backlashes. These effects make the resulting closed-loop system nonlinear and augment its state.

A mathematical model for the backlash is considered in the theory of ordinary differential equations with
hysteresis nonlinearities \cite{KP_1989} (see also \cite{BPRR_2006}). This model describes the backlash as a closed convex set whose spatial location is specified by an input, with the output being moved only on its contact with the boundary of the set. Such dynamics form a special yet practically important case of the Moreau sweeping process \cite{M_1971,M_1977} generated by a convex-set-valued  map. Although the resulting differential equation has a discontinuous right-hand side, the initial value problem is well-posed due to the convexity of the backlash set and the properties of normal cones and metric projections onto such  sets.

The presence of strong convexity \cite{FO_1981,GI_2017,I_2006,I_2020,KP_1989,P_1975,PB_2004}, when the supporting hyperplanes and the corresponding  half-spaces can be replaced with balls of bounded radius, leads to the enhancement of properties  which employ convexity. In particular, this yields improved (quadratic rather than linear) convergence rates  for the numerical solution of differential games \cite{I_1994,IP_1995}  with geometric constraints specified by strongly convex sets. The condition of strong convexity for the set-valued map in the Moreau process mentioned above leads to an exponentially fast convergence \cite{V_1997} to periodic solutions, provided the map is periodic and has no equilibria.  The corresponding Lyapunov exponent involves the arc length of the solution over the period, so that the more the periodic solution ``moves'', the ``faster'' it attracts the other trajectories. This property is an example of dissipativity which is of differential geometric nature rather than coming from energy dissipation (due to friction) in the context of mechanical engineering applications.

The present paper extends the results of \cite{V_1997} to a class of closed-loop systems which consist of a linear part (a plant governed by a linear ODE with constant coefficients and an external input as a forcing term) and a nonlinear feedback with a strongly convex backlash. Self-induced oscillations in an autonomous version of such systems (with no external input and without the strong convexity) were studied, for example, in \cite{VC_1995}.  Assuming that the linear system is stable in the absence of the backlash (as if the backlash set were reduced to a singleton), the trajectories  of the linearised system (more precisely, their tubular neighbourhoods) provide a localization for those of the nonlinear system. If the system is subject to a periodic input with a sufficiently large ``amplitude'', this leads to a strictly positive lower bound for the arc length of the backlash output path   over the period. Under the strong convexity condition, this bound gives rise to dissipation inequalities which involve an interplay between the (energy-related) dissipation in the linear subsystem, the geometric dissipativity of the strongly convex backlash, and the plant-backlash  coupling. In combination with the Gronwall-Bellman lemma,  these differential inequalities  lead to estimates for the Lyapunov exponents quantifying the rate of convergence to periodic regimes in the nonlinear system. The assumption of large amplitudes for periodic external inputs can be replaced  here with that of smallness of the strong convexity constant for the backlash set.

The paper is organised as follows.
Section~\ref{sec:sys}
specifies the class of closed-loop systems with a backlash in the feedback being considered.
Section~\ref{sec:stat}
discusses those initial conditions for the backlash output which remain stationary over a bounded time interval.
Section~\ref{sec:loc}
provides a tubular localization for trajectories of the nonlinear system about those of its linearization.
Section~\ref{sec:act}
establishes asymptotic bounds for the path length of the backlash output and its time derivative for periodic inputs  of large amplitudes.
Section~\ref{sec:diss} considers dissipation inequalities for the nonlinear system in the case of a strongly convex backlash set and obtains the rates of convergence to periodic trajectories.
Section~\ref{sec:conc}
summarizes the results and outlines further directions of research.
Appendices~A and B provide auxiliary lemmas on inward normal cones for convex and strongly convex sets and a spectral bound for a class of real symmetric matrices.

\section{\bf Nonlinear systems being considered}
\label{sec:sys}

We consider a nonlinear time invariant system consisting of a linear part
\begin{equation}
\label{xy}
  \dot{x} = Ax + B w + E z,
  \qquad
  y = Cx,
\end{equation}
where
$w$, $x$, $y$, $z$ are functions of time with values in $\mR^m$, $\mR^n$, $\mR^p$, $\mR^p$, respectively (and $A \in \mR^{n\x n}$, $B\in \mR^{n\x m}$, $C\in \mR^{p\x n}$, $E\in \mR^{n\x p}$ are constant matrices), and a nonlinear feedback whose output $z$ satisfies
\begin{equation}
\label{yz}
  z \in y + \Theta
\end{equation}
at any time,
where $\Theta  $  is a given convex compact in $\mR^p$. Here, $K+L:= \{u+v: u \in K, v\in L\}$ is the Minkowski sum of sets $K, L\subset \mR^p$ (which is the usual sum of vectors if the sets are singletons). Also, $\dot{(\ )}$ is the time derivative,  and the time arguments of signals will often be omitted for the sake of brevity. The inclusion (\ref{yz}) represents the effect of mechanical backlash
whose simplified (inertialess and frictionless)  model \cite{KP_1989}  is provided by
\begin{equation}
\label{zdot}
    \dot{z}
    =
    P_{N_{y+\Theta  }(z)}(\dot{y})
    =
    P_{N_\Theta  (q)}(\dot{y}),
\end{equation}
where
\begin{equation}
\label{q}
    q:= z-y
\end{equation}
is a $\Theta  $-valued function of time due to (\ref{yz}).
Here, for a closed convex set $S\subset
\mR^r$,
\begin{equation}
\label{N}
    N_S(u)
    :=
    \Big\{
        s\in \mR^r:\
        \inf_{v\in S}
            s^\rT(v-u)
        \> 0
    \Big\}
\end{equation}
 denotes the cone of inward normals to $S$ at a point $u\in \mR^r$, and
\begin{equation}
\label{P}
    P_S(u)
    :=
    \argmin_{v\in S}
    |u-v|
\end{equation}
is the metric projection of $u$ onto $S$. Also, $\mR^r$
(and other Euclidean spaces in consideration)
is endowed with the inner product $u^\rT v$ and the standard Euclidean norm $|u|:= \sqrt{u^\rT u}$ for vectors $u, v\in \mR^r$, where $(\cdot)^\rT$ is the transpose (vectors are organised as columns unless indicated otherwise). The second equality in (\ref{zdot}) follows from the invariance $N_S(u) = N_{S+d}(u+d)$ of the cone (\ref{N}) with respect to the translation of $S$ and $u$ by a common vector $d\in \mR^r$. The projection (\ref{P}) is nonexpanding in the sense that
\begin{equation}
\label{PP}
    |P_S(u) - P_S(v)| \< |u-v|,
    \qquad
    u, v \in \mR^r.
\end{equation}
In (\ref{xy}), the signals $w$, $x$, $y$ are interpreted as the external input, internal state and the output of the linear part of the system; see Fig.~\ref{fig:sys}.
\begin{figure}[htpb]
\centering
\unitlength=1.5mm
\linethickness{0.2pt}
\begin{picture}(65.00,40.00)
    \put(7,25){\makebox(0,0)[cc]{{$w$}}}
    \put(35,12){\makebox(0,0)[cb]{{$z$}}}
    \put(35,38){\makebox(0,0)[ct]{{$y$}}}
    \put(10,25){\vector(1,0){10}}
    \put(20,20){\framebox(10,10)[cc]{{linear}}}
    \put(40,20){\framebox(10,10)[cc]{{backlash}}}
    \put(25,30){\line(0,1){5}}
    \put(25,15){\vector(0,1){5}}
    \put(45,20){\line(0,-1){5}}
    \put(45,15){\line(-1,0){20}}

    \put(45,35){\vector(0,-1){5}}
    \put(45,35){\line(-1,0){20}}
\end{picture}\vskip-17mm
\caption{The closed-loop system (\ref{xy})--(\ref{zdot}) consisting of a linear part and  a nonlinear feedback with a backlash.
}
\label{fig:sys}
\end{figure}
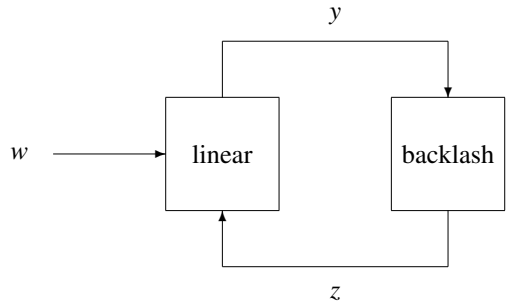
The initial conditions for the system (\ref{xy})--(\ref{zdot}) are specified by $x(0) \in \mR^n$ and $z(0) \in Cx(0) + \Theta  $ (or $q(0) \in \Theta$).
The input $w$ is assumed to be locally integrable, so that
\begin{equation}
\label{wgood}
    \int_0^T |w(t)|\rd t <+\infty
\end{equation}
for any time horizon $T>0$. Accordingly, $x$, $y$, $z$ are absolutely continuous functions of time, which can be verified by using the property
\begin{equation}
\label{zy}
    |\dot{z}|\< |\dot{y}|
\end{equation}
(following from (\ref{zdot}), (\ref{PP})) and an appropriate Gronwall-Bellman lemma estimate.

The backlash dynamics (\ref{zdot}) are a particular case of the Moreau sweeping process \cite{M_1971,M_1977} (see also \cite[pp.~158--160]{KP_1989} and \cite{V_1997}). In fact, as a corollary of the properties of the strong solutions to the Moreau process, the backlash output $z$ is well-defined for any absolutely continuous input $y$. This allows the solution for (\ref{xy})--(\ref{zdot}) to be obtained as the limit of the Picard
iterations (or, alternatively, by using the time discretization and consecutive projections as in the Moreau process), with the resulting solution depending continuously on the initial conditions.

With the backlash input $y$ being a differentiable function of time, the output $z$ remains at rest as long as it is in the interior $\interior (y+\Theta  )$ of the set on the right-hand side of (\ref{yz}) (such moments of time form an open subset of $\mR_+$).  This follows from (\ref{zdot})--(\ref{P}) and the property that the cone (\ref{N}) reduces to  the singleton $N_S(u)=\{0\}$ for any
$u\in \interior S$. The point $z$ can be displaced (along an inward normal, closest to $\dot{y}$) only when it is moved by the boundary $\d(y+\Theta  )$. Until such a contact takes place,
the system behaves like a linear  one.

\section{\bf Stationary initial conditions for the backlash output}
\label{sec:stat}

In order to describe the linear dynamics of the system with a constant backlash output $z$,
 we note that
for a given $T>0$, the set
\begin{equation}
\label{stat}
    \fR(T):= \bigcap_{0\< t\< T} (y(t)+\Theta  )
\end{equation}
(which is also a convex compact in $\mR^p$)
consists of those initial conditions $z(0)$ where the backlash output $z(t)$ remains at rest (that is, $z(t) = z(0)$) for all $t\in [0,T]$. Any point $z(0) \in \fR(T)$ gives rise to a constant forcing term $Ez(0)$ for the ODE in  (\ref{xy}) over the time interval $[0,T]$ and will be referred to as a \emph{$T$-stationary initial condition} for the backlash  output. For the theorem below, we will use an entire function
\begin{equation}
\label{Psi}
    \Psi(u):= \int_0^1 \re^{uv}\rd v
    =
    \left\{
    {\begin{matrix}
    1 & {\rm if}\  u=0\\
    \frac{\re^u-1}{u} & {\rm if}\  u\ne 0
    \end{matrix}}
    \right..
\end{equation}
Its extension from the complex plane to
square matrices \cite{H_2008} describes the solution
$
    \xi(t) := t\Psi(tA) \omega
$ of the ODE $\dot{\xi} = A\xi + \omega$ with a constant forcing term $\omega \in \mR^n$ and zero initial condition $\xi(0) = 0$. If $\det A \ne 0$,  this solution reduces to $\xi(t) = (\re^{tA}-I_n)A^{-1} \omega$, where $I_n$ is the identity matrix of order $n$.

\begin{thm}
\label{th:stat}
For a given time horizon $T>0$, suppose the matrix
\begin{equation}
\label{Phi}
  \Phi(t):= I_n - tC\Psi(tA) E,
\end{equation}
associated with the matrices $A$, $C$, $E$ in (\ref{xy}) and the function $\Psi$ in (\ref{Psi}),
is nonsingular for all $t \in [0,T]$.  Then the set (\ref{stat})  of $T$-stationary initial conditions for the backlash can be represented as
\begin{equation}
\label{stat1}
  \fR(T)
  =
  \bigcap_{0\< t\< T}
  \big(
  \Phi(t)^{-1}
  (
    C
    x_*(t)
    +
    \Theta
  )\big),
\end{equation}
where
\begin{equation}
\label{xz0}
    x_*(t)
    :=
    \re^{tA} x(0) + \int_0^t \re^{(t-s)A}B w(s)\rd s
\end{equation}
is the solution of the ODE in (\ref{xy}) in the absence of the feedback. \hfill$\square$
\end{thm}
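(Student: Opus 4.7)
The plan is to compute $\fR(T)$ directly. By the description preceding the theorem, $z(0)$ is $T$-stationary precisely when, after freezing the backlash output at $z(0)$, the resulting trajectory of (\ref{xy}) satisfies $z(0) \in y(t) + \Theta$ for every $t \in [0,T]$. Accordingly, I would substitute $z(t) \equiv z(0)$ into the ODE in (\ref{xy}), solve for $x$ (hence for $y = Cx$) in closed form, and then rewrite the constraint $z(0) \in y(t) + \Theta$ as an explicit condition on $z(0)$ alone.

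With $z \equiv z(0)$, the state equation in (\ref{xy}) becomes linear in $x$ with forcing $Bw + Ez(0)$. Variation of constants gives
$$x(t) = x_*(t) + \Big(\int_0^t \re^{(t-s)A}\rd s\Big) E z(0),$$
where $x_*$ is as in (\ref{xz0}). The substitution $u = t-s$ and the rescaling $v = u/t$, compared with (\ref{Psi}), show that $\int_0^t \re^{(t-s)A}\rd s = \int_0^t \re^{uA}\rd u = t\Psi(tA)$, so
$$y(t) = C x_*(t) + tC\Psi(tA) E z(0).$$

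The stationarity requirement $z(0) - y(t) \in \Theta$ then reads, in view of the definition (\ref{Phi}) of $\Phi(t)$, as $\Phi(t) z(0) - Cx_*(t) \in \Theta$, equivalently $\Phi(t) z(0) \in Cx_*(t) + \Theta$. Since $\Phi(t)$ is assumed nonsingular on $[0,T]$, this is in turn equivalent to $z(0) \in \Phi(t)^{-1}(Cx_*(t) + \Theta)$. Intersecting over $t \in [0,T]$ yields (\ref{stat1}).

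I do not expect a genuine obstacle. The only slightly non-routine step is the matrix-exponential identity $\int_0^t \re^{sA}\rd s = t\Psi(tA)$, which follows immediately from (\ref{Psi}) by rescaling and extending $\Psi$ via the functional calculus \cite{H_2008}. The hypothesis that $\Phi(t)$ is nonsingular throughout $[0,T]$ is exactly what is needed to pass from the inclusion $\Phi(t)z(0) \in Cx_*(t) + \Theta$ to the preimage form and to ensure that the right-hand side of (\ref{stat1}) is well defined as a subset of $\mR^p$.
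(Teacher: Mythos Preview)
Your proposal is correct and follows essentially the same route as the paper: freeze $z\equiv z(0)$, solve the resulting linear ODE via variation of constants to obtain $x(t)=x_*(t)+t\Psi(tA)Ez(0)$, rewrite $z(0)\in y(t)+\Theta$ as $\Phi(t)z(0)\in Cx_*(t)+\Theta$, invoke nonsingularity of $\Phi(t)$, and intersect over $t\in[0,T]$. The only cosmetic difference is that you spell out the change-of-variable derivation of $\int_0^t \re^{(t-s)A}\rd s=t\Psi(tA)$, which the paper takes as given from the discussion preceding the theorem.
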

\begin{proof}
In view of (\ref{stat}), the inclusion $z(0)\in \fR(T)$ is equivalent to
\begin{equation}
\label{zyZ}
    z(0) \in y(t)+\Theta  ,
    \qquad
    0\< t\< T,
\end{equation}
with the state $x$ satisfying the ODE  (\ref{xy}) driven by the constant backlash output $z(0)$ over the time interval $[0,T]$. Therefore,
\begin{align}
\nonumber
    x(t)
    & =
    \re^{tA}x(0) + \int_0^t \re^{(t-s)A}(Bw(s) + Ez(0))\rd s\\
\label{xz}
    & =
    x_*(t) + t \Psi(tA) Ez(0),
    \qquad
    0\< t\< T,
\end{align}
where $x_*$ is given by (\ref{xz0}).
Substitution of (\ref{xz}) into the second equality in (\ref{xy}) represents  (\ref{zyZ}) as
\begin{align}
\nonumber
    z(0)
    & \in
    C(x_*(t) + t \Psi(tA) Ez(0))+\Theta  \\
\label{zin}
    & = t C\Psi(tA) Ez(0)+Cx_*(t) + \Theta  ,
    \qquad
    0\< t\< T.
\end{align}
In view of (\ref{Phi}),  the inclusion in (\ref{zin}) is equivalent to
$
    \Phi(t)
    z(0)
    \in
    Cx_*(t) + \Theta
$,  that is,
\begin{equation}
\label{z0in}
    z(0)
    \in
    \Phi(t)^{-1}
    (Cx_*(t) + \Theta  ),
    \qquad
    0\< t\< T,
\end{equation}
since it is assumed that $\det \Phi(t)\ne 0$ for all $t\in [0,T]$.
In turn, (\ref{z0in}) is equivalent to
$
    z(0)
    \in
    \bigcap_{0\< t\< T}
    \big(
        \Phi(t)^{-1}
        (Cx_*(t) + \Theta  )
    \big)
$,
which establishes (\ref{stat1}).
\end{proof}

The representation (\ref{stat1}) shows that the set-valued function $\fR$ is nonincreasing: $\fR(t)\subset \fR(s)$ for all $t\> s\> 0$. Therefore, if $\fR(T)=\emptyset$ for some $T>0$, then $\fR(t)=\emptyset$ for all $t\>T$. In this case, the backlash has no $T$-stationary initial conditions and its boundary becomes active over the time interval $[0,T]$ in the sense that
\begin{equation}
\label{PL}
    \int_0^T |\dot{z}(t)|\rd t>0.
\end{equation}
A lower estimate for the path length (\ref{PL})  of the backlash output can be obtained by using a localization of the system trajectories.

\section{\bf Tubular localization of system trajectories}
\label{sec:loc}

If the backlash set $\Theta  $  were a singleton ($\{0\}$ for simplicity) then (\ref{yz}) would lead to $z=y$, and the system  would be governed by a linear ODE
\begin{equation}
\label{xieta}
  \dot{\xi} = A\xi + Bw + E\eta =  F\xi + Bw,
  \qquad
  \eta := C\xi,
\end{equation}
with the dynamics matrix
\begin{equation}
\label{F}
  F:= A + EC.
\end{equation}
The term $EC$ pertains to the feedback which becomes inactive, for example, if at least one of the matrices $C$ or $E$ vanishes.
Although the original system is nonlinear in the case of a nontrivial set $\Theta  $,  the state $x$ admits a localization about the trajectory of its ``linearised'' counterpart specified by  (\ref{xieta}), (\ref{F}) with the same initial condition $\xi(0):=x(0)$.
To this end, use will be made of the bijective correspondence  between convex compacts $S\subset \mR^r$ and their support functions \cite{R_1970}
\begin{equation}
\label{supp}
    \sigma_S(u)
    :=
    \max_{v\in S}
    u^\rT v,
    \qquad
    u \in \mR^r.
\end{equation}
These functions lend themselves to closed-form computation, for example, in the case of ellipsoids. Indeed, for an ellipsoid $S:= \{v \in \mR^r:\ \|v-c\|_{\Sigma^{-1}}\< 1\}$ with centre $c\in \mR^r$ and ``matrix  radius'' $\sqrt{\Sigma}$ (where $\Sigma=\Sigma^\rT \in \mR^{r\x r}$ is a positive definite matrix), the support function (\ref{supp}) takes the form $\sigma_S(u) = \max_{v\in \mR^r:\, |v|\< 1} u^\rT (c + \sqrt{\Sigma} v)= u^\rT c + \|u\|_\Sigma$, where $ \|u\|_\Sigma := |\sqrt{\Sigma} u| = \sqrt{u^\rT \Sigma u}$.

\begin{lem}
\label{lem:supp}
Suppose the input $w$ of the system (\ref{xy})--(\ref{zdot}) is locally integrable. Then, at any time $t\>0$, the system state satisfies
\begin{equation}
\label{xX}
    x(t) \in
    \xi(t) + \Xi (t),
\end{equation}
where
\begin{equation}
\label{xi}
  \xi(t):= \re^{tF} x(0) + \int_0^t \re^{(t-s)F} Bw(s)\rd s
\end{equation}
is the solution of the ODE in (\ref{xieta}) with the initial condition $\xi(0):=x(0)$. Here,   $\Xi (t)$ is a time-varying convex compact in $\mR^n$ whose support function (\ref{supp}) is linearly related to that of the backlash set $\Theta  $:
\begin{equation}
\label{Xsupp}
  \sigma_{\Xi (t)}(u) = \int_0^t \sigma_\Theta  (E^\rT \re^{sF^\rT}u)\rd s,
  \qquad
  u \in \mR^n.
\end{equation}
\hfill$\square$
\end{lem}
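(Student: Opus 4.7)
The plan is to reduce the nonlinear closed-loop equation to a linear ODE with an unknown $\Theta$-valued forcing term and then identify $\Xi(t)$ as the Aumann-type integral of the set-valued map $s\mapsto \re^{(t-s)F}E\Theta$. Using (\ref{yz}) and (\ref{q}) together with $y=Cx$ from (\ref{xy}), I would write $z = Cx + q$ with $q(t)\in\Theta$ for all $t\>0$. Substituting into the state equation in (\ref{xy}) yields $\dot{x} = (A+EC)x + Bw + Eq = Fx + Bw + Eq$, with $F$ as in (\ref{F}). Since $x(0)=\xi(0)$, variation of constants gives
\begin{equation*}
    x(t) - \xi(t)
    =
    \int_0^t \re^{(t-s)F} E q(s)\,\rd s,
\end{equation*}
where $\xi$ is defined by (\ref{xi}).

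I would then take $\Xi(t)$ to be the set of all such displacements, i.e.\ the collection of integrals $\int_0^t \re^{(t-s)F}E\theta(s)\,\rd s$ as $\theta$ ranges over measurable selections with $\theta(s)\in\Theta$ on $[0,t]$. Because $\Theta$ is convex and compact and the matrix-valued weight $s\mapsto \re^{(t-s)F}E$ is continuous (hence bounded) on $[0,t]$, a standard argument shows that $\Xi(t)$ is a convex compact subset of $\mR^n$: convexity follows from the convexity of $\Theta$ together with the linearity of the integral, while compactness follows from the uniform boundedness of the integrand and a weak compactness/Arzel\`a--Ascoli type argument. Since $q$ is such an admissible selection (indeed it is absolutely continuous and $\Theta$-valued), (\ref{xX}) follows at once.

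Finally, for the support function I would compute
\begin{equation*}
    \sigma_{\Xi(t)}(u)
    =
    \sup_{\theta}
    \int_0^t u^\rT \re^{(t-s)F} E\theta(s)\,\rd s
    =
    \int_0^t
    \sup_{v\in \Theta}
    u^\rT \re^{(t-s)F} E v\,\rd s,
\end{equation*}
followed by the substitution $s \mapsto t-s$, which yields $\sigma_{\Xi(t)}(u) = \int_0^t \sigma_\Theta(E^\rT \re^{sF^\rT} u)\,\rd s$, i.e.\ (\ref{Xsupp}). The main step requiring care is the interchange of supremum and integral: it is justified by the fact that $(s,v)\mapsto u^\rT \re^{(t-s)F} E v$ is jointly continuous and linear in $v$, so the pointwise maximizer over the compact convex set $\Theta$ admits a measurable (in fact, upper semicontinuous) selection, and the corresponding $\theta^*\in L^\infty([0,t],\Theta)$ realises the outer supremum. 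Once this measurable selection step is handled, all other computations are routine.
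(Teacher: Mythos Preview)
Your proposal is correct and follows essentially the same route as the paper: rewrite the state equation as $\dot{x}=Fx+Bw+Eq$ with $q\in\Theta$, apply variation of constants to get $x(t)-\xi(t)=\int_0^t \re^{(t-s)F}Eq(s)\,\rd s$, and recognise the resulting displacement as lying in the set with support function (\ref{Xsupp}). The only presentational difference is that the paper defines $\Xi(t)$ directly by its support function and therefore needs only the trivial inequality $u^\rT(x(t)-\xi(t))\< \int_0^t\sigma_\Theta(E^\rT\re^{sF^\rT}u)\,\rd s$ (no measurable-selection argument required), whereas you define $\Xi(t)$ as the Aumann integral first and then compute its support function, which obliges you to justify the interchange of supremum and integral; the paper in fact remarks in a footnote that this interchange yields equality but is not needed for the result.
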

\begin{proof}
In view of (\ref{F}),  the ODE in (\ref{xy}) can be represented as
\begin{equation}
\label{xy1}
  \dot{x} = Ax + B w + E (y +q) = Fx + Bw + Eq,
\end{equation}
where $q$ is given by (\ref{q}). The solution of (\ref{xy1}) satisfies
\begin{equation}
\label{xsol}
  x(t) - \xi(t) =  \int_0^t \re^{sF}Eq(t-s)\rd s
\end{equation}
in view of (\ref{xi}) and the linear superposition principle.
Now, for any $u \in \mR^n$,
\begin{equation}
\label{sup}
    \sup_{q:[0,t]\to \Theta  }
    u^\rT
    \int_0^t \re^{sF} Eq(t-s)\rd s
    \<
    \int_0^t
    \max_{v \in \Theta  }
    (
    u^\rT
    \re^{sF} Ev
    )
    \rd s
    =
    \int_0^t
    \sigma_\Theta  (E^\rT \re^{sF^\rT}u)
    \rd s,
\end{equation}
where the supremum is over measurable $\Theta  $-valued functions on the time interval $[0,t]$.\footnote{Note that the inequality in (\ref{sup}) holds as an equality (although this does not affect the results).} The right-hand side of (\ref{sup}) is the support function of the convex compact $\Xi (t) \subset \mR^n$ from (\ref{Xsupp}), and hence, the vector on the right-hand side of (\ref{xsol}) belongs to $\Xi (t)$, thus establishing (\ref{xX}).
\end{proof}

The above proof shows that Lemma~\ref{lem:supp} employs only the inclusion (\ref{yz}) and is valid regardless of the specific backlash dynamics (\ref{zdot}). For simplicity, it is assumed   in what follows that
\begin{equation}
\label{0inZ}
    0\in \Theta  ,
\end{equation}
and hence, $\sigma_\Theta  (u)\>0$ for all $u\in \mR^p$.  In this case, the right-hand side of (\ref{Xsupp}) is nondecreasing in time $t$, and so also is the set-valued map $\Xi $ (that is, $\Xi (t)\supset \Xi (s)$ for all $t\> s\> 0$). This leads to a bounded limit
\begin{equation}
\label{Xinf}
    \Xi _\infty:= \clos \bigcup_{t\>0} \Xi (t)
\end{equation}
(with $\clos(\cdot)$ the closure of a set), provided the matrix $F$ in (\ref{F}) is Hurwitz. In view of (\ref{Xsupp}), the support function of the limit set (\ref{Xinf}) is
\begin{equation}
\label{Xinfsupp}
  \sigma_{\Xi _\infty}(u)
  =
  \int_0^{+\infty} \sigma_\Theta  (E^\rT \re^{sF^\rT}u)\rd s,
  \qquad
  u \in \mR^n.
\end{equation}
Note that the convergence in (\ref{Xinf}) is exponentially fast. More precisely,  (\ref{Xsupp}), (\ref{Xinfsupp}) imply that
\begin{align}
\nonumber
    D(\Xi _\infty, \Xi (t))
     &= \max_{u\in \mS_n}|\sigma_{\Xi _\infty}(u)-\sigma_{\Xi (t)}(u)|\\
\nonumber
    & =
    \max_{u\in \mS_n}
    \int_t^{+\infty} \sigma_\Theta  (E^\rT \re^{sF^\rT}u)\rd s\\
\nonumber
    & \<
    \int_t^{+\infty}
        \max_{u\in \mS_n}
    \sigma_\Theta  (E^\rT \re^{sF^\rT}u)\rd s\\
\label{DXX}
    & \<
    D(\Theta  ,\{0\})
    \|E\|
    \int_t^{+\infty}
    \|\re^{sF}\|
    \rd s,
\end{align}
where
$
    \mS_n:= \{u \in \mR^n:\ |u|=1\}
$
is the unit sphere in $\mR^n$.
Here, use is made of the Hausdorff deviation
\begin{equation}
\label{DLM}
    D(L,M)
    :=
    \sup_{u\in L}
    \rho(u,M)
\end{equation}
of a set $L\subset \mR^r$ from another set $M\subset \mR^r$,
with
\begin{equation}
\label{rho}
    \rho(u,M)
    :=
    \inf_{v\in M}
    |u-v|
\end{equation}
denoting the distance from a point $u\in \mR^r$ to $M$. In view of (\ref{DLM}), (\ref{rho}), (\ref{supp}),
\begin{equation}
\label{DTheta0}
    D(\Theta,\{0\}) = \max_{v \in \Theta  }|v| = \max_{u\in \mS_n} \sigma_\Theta  (u)
\end{equation}
quantifies the deviation of the backlash set $\Theta  $ from the origin.
Also, $\|\cdot\|$ in (\ref{DXX}) is the operator norm of matrices (induced by the Euclidean vector norm $|\cdot|$), and its submultiplicativity is used. From (\ref{DXX}), it follows that
$    \limsup_{t\to +\infty}
    \big(
    \frac{1}{t}
    \ln
    D(\Xi _\infty, \Xi (t))
    \big)
    \<
    \mu$,
where
\begin{equation}
\label{stab}
  \mu
  :=
  \max_{1\< k\< n} \Re \lambda_k
  =
  \ln \br(\re^F)
  <0
\end{equation}
is the largest real part of the eigenvalues $\lambda_1, \ldots, \lambda_n$
of the Hurwitz matrix $F$ in (\ref{F}), so that $\frac{1}{|\mu|}$ quantifies the decay time for transient processes in the linearised system (\ref{xieta}), with $\br(\cdot)$ the spectral radius of a square matrix.  Therefore,  on time scales $t\gg \frac{1}{|\mu|}$, the inclusion
\begin{equation}
\label{xX1}
    x(t) \in \xi(t) + \Xi _\infty
\end{equation}
(obtained by replacing the set $\Xi (t)$ in (\ref{xX}) with its limit $\Xi _\infty$ from (\ref{Xinf})) is only slightly more conservative than (\ref{xX}). The right-hand side of (\ref{xX1}) can be viewed as a ``tube'' about the  trajectory $\xi$ of the linearised system in (\ref{xi}), with its ``cross section'' being specified by the set $\Xi _\infty$. Note that $\Xi _\infty$ does not depend on the input $w$ which enters the right-hand side of (\ref{xX1}) only through  $\xi$.

\section{\bf Backlash boundary  activation for periodic inputs}
\label{sec:act}

Consider the system dynamics when the external input $w$ is a $T$-periodic bounded function of time. More precisely, suppose $w(t+T)=w(t)$ for all $t\>0$, and $w|_{[0,T]}$ belongs to the Banach space $ L_\infty([0,T], \mR^m)$
with the norm $\|w\|_\infty:= \esssup_{0\< t\< T} |w(t)|$, whereby the local integrability condition (\ref{wgood}) is also satisfied.
With the matrix $F$ in (\ref{F}) being assumed to be Hurwitz, the ODE in (\ref{xieta}) has a unique $T$-periodic solution $\xi_T$ with the initial condition
\begin{equation}
\label{xT}
  \xi_T(0) = (I_n-\re^{TF})^{-1} \int_0^T \re^{(T-t)F} Bw(s)\rd s.
\end{equation}
Substitution of (\ref{xT}) into (\ref{xi}) represents this solution in the form
\begin{align}
\nonumber
  \xi_T(t)
  & =
  \re^{tF}
  (I_n-\re^{TF})^{-1}
  \int_0^T \re^{(T-s)F} Bw(s)\rd s
  +
  \int_0^t \re^{(t-s)F} Bw(s)\rd s\\
\nonumber
  & =
  ((I_n-\re^{TF})^{-1}-I_n)
  \int_0^T \re^{(t-s)F} Bw(s)\rd s
  +
  \int_0^t \re^{(t-s)F} Bw(s)\rd s\\
\label{xiT}
  & =
  \int_0^T
  ((I_n-\re^{TF})^{-1} - \chi_{[t,T]}(s) I_n)
  \re^{(t-s)F} Bw(s)\rd s
\end{align}
for all $t \in [0,T]$,
where $\chi_S(\cdot)$ is the indicator function of a set $S$. Since the linearised system (\ref{xieta}) is stable and  the external input $w$ is $T$-periodic, then
$\xi_T$ in (\ref{xiT}) is a $T$-periodic global attractor for the system state $\xi$ in the sense that \begin{equation}
\label{xixi}
    \lim_{t\to +\infty} |\xi(t)-\xi_T(t)| = 0
\end{equation}
for any initial condition $\xi(0)$.
A similar property holds for the corresponding $T$-periodic output
\begin{equation}
\label{etaT}
  \eta_T := C \xi_T
\end{equation}
with respect to the output $\eta$ of the linear system (\ref{xieta}).
In view of (\ref{xX1}), all those initial conditions $x(0)$, $z(0)$, which give rise to $T$-periodic  trajectories of the nonlinear system  (\ref{xy})--(\ref{zdot}), belong to the convex compact
\begin{equation}
\label{loc}
    x(0) \in \xi_T(0)+\Xi_\infty,
    \
    z(0) \in C x(0) + \Theta \subset \eta_T(0)+C\Xi_\infty+\Theta.
\end{equation}
The Poincare map, associated with the nonlinear system, is continuous and, by the Brouwer fixed-point  theorem \cite{AG_1980},  has at least one fixed point in the set (\ref{loc}). We will  be concerned with conditions which secure  uniqueness  for the forced periodic regime in the nonlinear system and exponentially fast convergence to it. To this end,  the following theorem obtains asymptotic bounds  for the backlash output path length (see also \cite{V_1997} for a similar bound) and its time derivative
by using the tubular localization from Section~\ref{sec:loc}.
For its formulation, we denote by
\begin{equation}
\label{diam}
    \diam (S):= \sup_{u,v\in S}|u-v|
\end{equation}
the diameter of a bounded set $S \subset \mR^r$. Accordingly, the oscillation of a vector-valued function $f$ on a set $ K$ is the diameter
\begin{equation}
\label{osc}
    \Omega_K(f)
    :=
    \sup_{s,t\in K}
    |f(s)-f(t)|
    =
    \diam(f(K))
\end{equation}
of the image $f(K):= \{f(t): t \in K\}$ of $K$ under the map $f$.

\begin{thm}
\label{th:path}
Suppose the matrix $F$ of the linearised system in (\ref{xieta}), (\ref{F}) is Hurwitz, and the backlash set $\Theta  $ satisfies (\ref{0inZ}). Also, let the nonlinear system (\ref{xy})--(\ref{zdot}) be driven by a $T$-periodic bounded input $w$. Then, for any initial condition of the system,  the path length for the backlash output $z$ satisfies
\begin{equation}
\label{path}
    \liminf_{\tau\to +\infty}
    \int_\tau^{\tau+T} |\dot{z}(t)|\rd t
    \>
    (\mho -    d)_+
\end{equation}
(with $(\cdot)_+ := \max(\cdot, 0)$  the positive cutoff function), where
\begin{equation}
\label{mho}
  \mho:= \Omega_{[0,T]}(\eta_T)
\end{equation}
is  the oscillation (\ref{osc}) of the $T$-periodic output (\ref{etaT}) of the linearised system, and
\begin{equation}
\label{d}
      d:= \diam(C\Xi _\infty + \Theta  ),
\end{equation}
is the diameter (\ref{diam}) of the set $C\Xi _\infty + \Theta$ associated with $\Xi _{\infty}$ in (\ref{Xinf}), (\ref{Xinfsupp}). Furthermore,
\begin{equation}
\label{zinf}
  \limsup_{t\to +\infty}
  |\dot{z}(t)|
  \<
  \|\dot{\eta}_T\|_\infty
  +
  D(C(F\Xi_\infty + E\Theta), \{0\}).
\end{equation}
\hfill$\square$
\end{thm}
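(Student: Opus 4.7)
The plan is to reduce both bounds to elementary triangle-inequality arguments using the tubular localization from Lemma~\ref{lem:supp} together with the $T$-periodicity of the linearised attractor $\xi_T$.

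For (\ref{path}), the key observation is that the total variation dominates every chordal length:
\begin{equation*}
\int_\tau^{\tau+T}|\dot z(t)|\,\rd t
\>
|z(t_1)-z(t_2)|,
\qquad
t_1,t_2\in [\tau,\tau+T].
\end{equation*}
First I would pick $t_1^*, t_2^*\in [0,T]$ attaining $|\eta_T(t_1^*)-\eta_T(t_2^*)|=\mho$ (existence by continuity of $\eta_T$ and compactness of $[0,T]$), and use the $T$-periodicity of $\eta_T$ to translate them into $[\tau,\tau+T]$, obtaining $t_1,t_2$ there with $|\eta_T(t_1)-\eta_T(t_2)|=\mho$. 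Next, I would write $z(t)-\eta_T(t) = C(x(t)-\xi_T(t)) + q(t)$ with $q := z-y \in \Theta$ from (\ref{q}), (\ref{yz}), and decompose $x-\xi_T = (x-\xi)+(\xi-\xi_T)$. From (\ref{xX}) and (\ref{Xinf}), $x-\xi\in \Xi(t)\subset \Xi_\infty$, while (\ref{xixi}) gives $\xi-\xi_T = o(1)$, so $z(t_i)-\eta_T(t_i) \in C\Xi_\infty + \Theta + o(1)$. The reverse triangle inequality
\begin{equation*}
|z(t_1)-z(t_2)|
\>
|\eta_T(t_1)-\eta_T(t_2)|-|(z(t_1)-\eta_T(t_1))-(z(t_2)-\eta_T(t_2))|
\end{equation*}
then yields $\int_\tau^{\tau+T}|\dot z|\,\rd t \> \mho - d - o(1)$ as $\tau\to +\infty$, with $d$ as in (\ref{d}); taking $\liminf$ and using nonnegativity of the path length establishes (\ref{path}).

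For (\ref{zinf}), the nonexpansiveness $|\dot z|\< |\dot y|$ from (\ref{zy}) reduces the task to bounding $\limsup_t |\dot y(t)|$. Using (\ref{xy1}), $\dot y = C(Fx + Bw + Eq)$, while $\dot \eta_T = C(F\xi_T + Bw)$ from the linearised ODE in (\ref{xieta}). Subtracting and applying the same decomposition as above,
\begin{equation*}
\dot y(t) - \dot \eta_T(t)
=
C(F(x(t)-\xi_T(t)) + Eq(t))
\in
C(F\Xi_\infty + E\Theta) + o(1),
\end{equation*}
whose norm is at most $D(C(F\Xi_\infty + E\Theta),\{0\}) + o(1)$ by the sup representation in (\ref{DTheta0}). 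Combining with the triangle inequality $|\dot y|\<|\dot \eta_T|+|\dot y - \dot \eta_T|$ and $\limsup_t|\dot \eta_T(t)|\<\|\dot \eta_T\|_\infty$ (which follows from $T$-periodicity of $\dot \eta_T$) produces (\ref{zinf}).

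The main obstacle is securing uniformity of the $o(1)$ term with respect to the two times $t_1, t_2\in [\tau,\tau+T]$ appearing in the lower bound: one needs $\max_{t\in [\tau,\tau+T]}|\xi(t)-\xi_T(t)|\to 0$ as $\tau\to +\infty$, which follows from (\ref{xixi}) because $\xi - \xi_T$ satisfies the Hurwitz linear ODE $\dot{(\xi-\xi_T)} = F(\xi-\xi_T)$ and hence decays exponentially, uniformly on time windows of fixed length $T$. A minor secondary observation, immediate from $T$-periodicity of $\eta_T$, is that its image on every translate $[\tau,\tau+T]$ coincides with its image on $[0,T]$, so the oscillation $\mho$ is realised on each such translate.
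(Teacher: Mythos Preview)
Your proposal is correct and follows essentially the same approach as the paper: the tubular localization $z-\eta\in C\Xi_\infty+\Theta$ combined with the triangle inequality and the convergence $\eta\to\eta_T$ for (\ref{path}), and the decomposition of $\dot y-\dot\eta_T$ together with (\ref{zy}) for (\ref{zinf}). The only cosmetic difference is that the paper compares $z$ with the generic linearised output $\eta=C\xi$ (which lies exactly in $C\Xi_\infty+\Theta$ plus $z$) and then passes to the limit $\Omega_{[\tau,\tau+T]}(\eta)\to\mho$, whereas you compare $z$ directly with $\eta_T$ and carry the residual $C(\xi-\xi_T)=o(1)$ explicitly; the two bookkeepings are equivalent.
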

\begin{proof}
From (\ref{q}) and  the second equalities in (\ref{xy}), (\ref{xieta}), it follows that
\begin{equation}
\label{diffs}
    z-\eta = C(x-\xi) + q
    \in C\Xi _\infty + \Theta,
\end{equation}
where use is made of (\ref{xX1}) under the conditions (\ref{0inZ}) and the matrix $F$ in (\ref{F}) being Hurwitz.
A combination of the triangle inequality  with (\ref{diffs}), (\ref{d}) leads to
\begin{align}
\nonumber
    |\eta(s)-\eta(t) |
     &=
    |z(s)-z(t)+  z(t)-\eta(t)-(z(s)-\eta(s))|    \\
\nonumber
    & \<
    |z(s)-z(t)|
    +
    |z(t)-\eta(t)-(z(s)-\eta(s))|    \\
\label{diffs1}
    & \<
    |z(s)-z(t)| + d
\end{align}
for all $s,t\>0$. By taking the supremum on both sides of (\ref{diffs1}) with respect to $s,t \in K$ over a time interval $K$, it follows that the corresponding oscillations (\ref{osc}) of the functions $\eta$, $z$  satisfy
\begin{equation}
\label{oscd}
\Omega_K(\eta)
   \<
    \Omega_K(z)
    +
  d \<
  \int_K |\dot{z}(t)|\rd t + d,
\end{equation}
so that
\begin{equation}
\label{diff2}
      \int_K |\dot{z}(t)|\rd t
      \>
      (\Omega_K(\eta)
      -
      d)_+.
\end{equation}
Since $\eta_T$ in (\ref{etaT})  is a $T$-periodic attractor for the  output $\eta$ of the stable linear system (\ref{xieta}) with the $T$-periodic external input $w$ (in the sense that $\lim_{t\to +\infty} |\eta(t)-\eta_T(t)| = 0$), then
\begin{equation}
\label{diff3}
  \lim_{\tau\to +\infty}
  \Omega_{[\tau, \tau+T]}(\eta)
  =
  \Omega_{[0, T]}(\eta_T).
\end{equation}
The inequality (\ref{path}) is now obtained by combining (\ref{diff2}) with (\ref{diff3}) and using (\ref{mho}). In order to prove (\ref{zinf}), we note that
\begin{align}
\nonumber
    \dot{y}
    & = C\dot{x} = C(Fx + Bw + Eq)\\
\nonumber
     & = C(F\xi_T+ Bw+ F(\xi-\xi_T) + F(x-\xi) +  Eq)\\
\nonumber
    & = C\dot{\xi}_T + C(F(\xi-\xi_T) + F(x-\xi) +  Eq)    \\
\label{ydot}
    & \in \dot{\eta}_T + CF(\xi-\xi_T) + C(F\Xi_\infty +  E\Theta),
\end{align}
where use is made of (\ref{xy1}), (\ref{xX1}), (\ref{xieta}). Since $\dot{\eta}_T$ is  a $T$-periodic bounded function of time, a combination of (\ref{ydot}) with (\ref{xixi}) leads to
\begin{equation}
\label{yinf}
  \limsup_{t\to +\infty}
  |\dot{y}(t)|
  \<
  \|\dot{\eta}_T\|_\infty
  +
  D(C(F\Xi_\infty + E\Theta), \{0\}),
\end{equation}
where the last term is similar to (\ref{DTheta0}). The relation (\ref{zinf}) follows from (\ref{yinf}) in view of (\ref{zy}).
\end{proof}

The lower bound (\ref{path}) provided by Theorem~\ref{th:path} pertains to a general problem of finding the minimum arc length of a curve in a tubular neighbourhood of another curve.
A similar argument, using (\ref{diffs1}), (\ref{oscd}), leads to
\begin{equation}
\label{path1}
    \liminf_{\tau\to +\infty}
    \int_\tau^{\tau+T} |\dot{z}(t)|\rd t
    \>
    \sup
    \sum_{k=1}^N
    (|\eta_T(t_k)-\eta_T(t_{k-1})| - d)_+,
\end{equation}
where the supremum is taken over all partitions  $0=t_0 < t_1 < \ldots < t_N=T$ of the interval $[0,T]$ into $N=1,2,3,\ldots$ subintervals. In particular, if the points $\eta_T(t_1), \ldots, \eta_T(t_N)$ are centres of pairwise disjoint open balls of radius $\eps >0$, thus giving rise to an $\eps$-packing of the set $\eta_T([0,T]) \subset \mR^p$, then the corresponding sum on the right-hand side of (\ref{path1}) satisfies
$
    \sum_{k=1}^N
    (|\eta_T(t_k)-\eta_T(t_{k-1})| - d)_+ \> (2\eps-d)_+ N$. Therefore,
\begin{equation}
\label{path2}
    \liminf_{\tau\to +\infty}
    \int_\tau^{\tau+T} |\dot{z}(t)|\rd t
    \>
    \sup_{\eps > d/2 }
    ((2\eps-d)N_\eps ),
\end{equation}
where $N_\eps$ is the largest cardinality of an $\eps$-packing of the set $\eta_T([0,T]) $ (so that $\log_2 N_\eps$ is the $\eps$-capacity \cite{K_1956} of this set). Note that the right-hand side of (\ref{path2}) is amenable to asymptotic analysis as $d\to 0+$.

Now, the right-hand side of (\ref{path}) is positive if the external input $w$ has a sufficiently large ``amplitude'' in the sense that the oscillation $\mho$ in (\ref{mho})
exceeds the quantity $d$ in (\ref{d}) which does  not depend on $w$; see Fig.~\ref{fig:loop}.
\begin{figure}[htb]
\begin{center}
\includegraphics[scale=0.3]{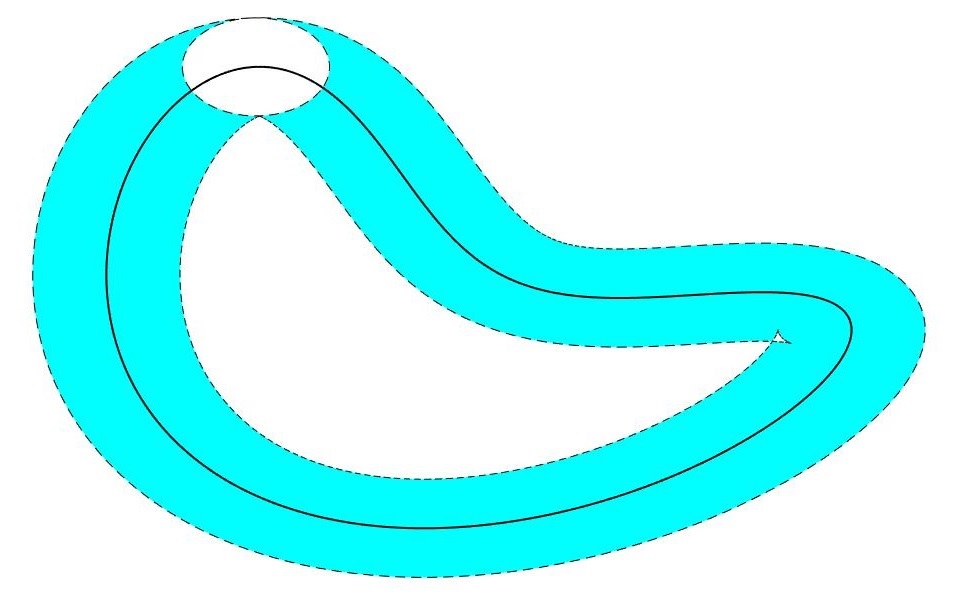}
\caption{An illustration of the tubular localization (between the dashed lines) for periodic trajectories $z$ of the original nonlinear system about the periodic trajectory $\eta_T$ (solid curve) for the linearised system. The ellipse represents the set $C\Xi _\infty + \Theta$ in (\ref{d}) associated with  $\Xi _\infty$ from (\ref{Xinf}), (\ref{Xinfsupp}).}
\label{fig:loop}
\end{center}
\end{figure}
 In this case,
the backlash output $z$ has no stationary initial conditions (see Section~\ref{sec:stat}) and  is eventually forced to move (its path length over any sufficiently distant time interval of duration $T$ is separated from zero). This backlash motion activates dissipation inequalities for the system under additional geometric constraints.

\section{\bf Dissipation relations using strong convexity}
\label{sec:diss}

For what follows, suppose the backlash set $\Theta  $ in (\ref{yz}) is a strongly convex subset of $\mR^p$ in the sense that the quantity
\begin{equation}\label{R}
    R
    :=
    \inf
    \Big\{
        r\> 0:\,
        \Theta
        \subset
        \bigcap_{
            u\in \d \Theta  ,\,
            v \in N_\Theta  (u) \bigcap \mS_p
        }
        \overline{B}_r(u+rv)
    \Big\}
\end{equation}
is finite, where $\overline{B}_r(c):= \clos B_r(c)$ is the closed ball with
centre $c$ and radius $r$.  In this case, $R$ is referred to as the \emph{strong convexity constant} of $\Theta  $.
Therefore, $R$ is the smallest $r$ such that for any supporting
hyperplane for the set $\Theta  $ at any given boundary point $u \in
\d \Theta  $ with a unit inward normal $v \in N_\Theta  (u)\bigcap \mS_p$, the closed ball
$\overline{B}_r(u+rv)$, supported at $u$ by the same hyperplane,
contains $\Theta  $.
This definition of strong convexity is essentially equivalent
to those in \cite{FO_1981,P_1975} and \cite[pp.~164--165]{KP_1989}.
Alternatively, the quantity $R$ in (\ref{R}) is
the smallest $r$ for which $\Theta  $ can be represented as an intersection
of closed balls of radius $r$.

Consider two trajectories of the underlying nonlinear system (\ref{xy})--(\ref{zdot}) driven by the same input $w$, but with different initial conditions $x_k(0) \in \mR^n$ and $z_k(0)\in y_k(0)+\Theta  $, $k=1,2$. The deviations of signals, associated with these trajectories, are given by
\begin{align}
\label{X_Y}
    X
    & := x_1 - x_2,
    \qquad
    Y
     := y_1 - y_2 = CX,\\
\label{Z_Q}
    Z
    & := z_1 - z_2,
    \qquad
    Q
     := q_1 - q_2 = Z - Y
\end{align}
and satisfy the ODEs
\begin{align}
\label{Xdot}
  \dot{X}
  & = AX + EZ
  =
  FX + EQ,\\
\label{Zdot}
    \dot{Z}
    & =
    P_{N_\Theta  (q_1)}(\dot{y}_1)
    -
    P_{N_\Theta  (q_2)}(\dot{y}_2)
\end{align}
in accordance with (\ref{xy1}), (\ref{F}). Due to the strong convexity of the backlash set $\Theta  $ with the constant $R$,
application of (\ref{lem1:2}) of Lemma~\ref{lem:diss}
to (\ref{Zdot})
leads to
\begin{equation}
\label{QZdot}
        Q^\rT
        \dot{Z}
    \<
    -
    \gamma
    |Q|^2,
\end{equation}
where
\begin{equation}
\label{zz}
    \gamma
    :=
    \frac{1}{2R}
    (|\dot{z}_1|+|\dot{z}_2|)
\end{equation}
is a locally integrable nonnegative function of time.\footnote{the case of the usual (rather than strong) convexity is obtained formally  by letting $R\to +\infty$, in which case, $\gamma=0$  and (\ref{QZdot}) reduces to $        Q^\rT
        \dot{Z} \< 0
$.}
 By substituting $Z = Q + Y$ from   (\ref{Z_Q}) into (\ref{QZdot}) and using the identity $Q^\rT \dot{Q} = \frac{1}{2}(|Q|^2)^{^\centerdot}$,
 it follows that
\begin{equation}
\label{Q2dot}
    (|Q|^2)^{^\centerdot}
    +
    2Q^\rT
        \dot{Y}
    \<
    -
    2\gamma
    |Q|^2.
\end{equation}
A combination of (\ref{X_Y}) with (\ref{Xdot}) allows (\ref{Q2dot}) to be represented as
\begin{equation}
\label{Q2dot1}
    (|Q|^2)^{^\centerdot}
    \<
    -
    2X^\rT F^\rT C^\rT Q
    -
    2Q^\rT (\bS(CE) + \gamma I_p) Q,
\end{equation}
where
$
    \bS(M):= \frac{1}{2}(M+M^{\rT})
$
is the symmetrizer of square matrices.
In order to quantify the deviation of the system trajectories, consider the following candidate for a Lyapunov function:
\begin{equation}
\label{V_Gamma}
    V
    :=
    \|X\|_{\Pi}^2
    +
    |Q|^2
    =
    \left\|
    \begin{bmatrix}
      X\\
      Q
    \end{bmatrix}
    \right\|_{\Gamma}^2,
    \qquad
    \Gamma
    :=
    \begin{bmatrix}
      \Pi & 0\\
      0 & I_p
    \end{bmatrix},
\end{equation}
which is specified by a real positive definite  symmetric matrix $\Pi$  of order $n$.
In view of (\ref{Xdot}), (\ref{Q2dot1}), the time derivative of $V$ in (\ref{V_Gamma}) satisfies
\begin{align}
\nonumber
    \dot{V}
     & =
    2X^\rT \Pi \dot{X}
    +
    (|Q|^2)^{^\centerdot}\\
\nonumber
    & =
        2X^\rT \bS(\Pi F)X
    +
    2X^\rT \Pi E Q
    +
    (|Q|^2)^{^\centerdot}\\
\nonumber
    & \<
        2X^\rT \bS(\Pi F)X
    +
    2X^\rT \Pi E Q
    -
    2X^\rT F^\rT C^\rT Q
        -
    2Q^\rT (\bS(CE) + \gamma I_p) Q
    \\
\nonumber
    & =
        2X^\rT \bS(\Pi F)X
    +
    4X^\rT HQ
    -
    2Q^\rT (\bS(CE)+\gamma I_p) Q\\
\label{Vdot}
    & =
    2
    \begin{bmatrix}
      X^\rT &
      Q^\rT
    \end{bmatrix}
    G
        \begin{bmatrix}
      X\\
      Q
    \end{bmatrix},
\end{align}
where
\begin{equation}
\label{G_H}
    G
    :=
    \begin{bmatrix}
      \bS(\Pi F) & H\\
      H^\rT & -\bS(CE)-\gamma I_p
    \end{bmatrix},
    \qquad
    H
     := \frac{1}{2}(\Pi E-F^\rT C^\rT)
\end{equation}
(note that the matrix $H$ depends on $\Pi$).
Let $\lambda$ be a fixed but otherwise arbitrary scalar such that
\begin{equation}
\label{lammu}
    0<\lambda < |\mu|,
\end{equation}
with $\mu$ given by (\ref{stab}). Then  there exists a matrix $\Pi\succ 0$ satisfying the algebraic Lyapunov inequality
$
    \Pi F + F^\rT \Pi  \preccurlyeq -2\lambda \Pi
$, whose equivalent form is
\begin{equation}
\label{ALI}
    \bS(\Pi F) \preccurlyeq -\lambda \Pi .
\end{equation}
Since the matrix $\Gamma$ in (\ref{V_Gamma}) is positive definite, (\ref{G_H}), (\ref{ALI}) imply that
\begin{align}
\nonumber
    \Lambda
    & :=
    \Gamma^{-1/2} G \Gamma^{-1/2}\\
\nonumber
    & =
    {\begin{bmatrix}
      \Pi^{-1/2}\bS(\Pi F)\Pi^{-1/2} & \Pi^{-1/2}H\\
      H^\rT\Pi^{-1/2} & -\bS(CE)- \gamma I_p
    \end{bmatrix}}\\
\label{UpsGamma}
    & \preccurlyeq
    {\begin{bmatrix}
      -\lambda I_n  & \Pi^{-1/2}H\\
      H^\rT\Pi^{-1/2} & -(\alpha + \gamma) I_p
    \end{bmatrix}},
\end{align}
where
\begin{equation}
\label{alpha}
    \alpha
    :=
    \lambda_{\min}(\bS(CE))
\end{equation}
depends only on the coupling between the linear part of the system and the backlash in (\ref{xy}).
From the dependence of the matrix $\Lambda$ on $\Gamma$, $G$, it follows that
\begin{equation}
\label{Upsup}
    G
    =
    \sqrt{\Gamma} \Lambda \sqrt{\Gamma}
    \preccurlyeq
    \phi \Gamma,
    \qquad
    \phi:= \lambda_{\max}(\Lambda),
\end{equation}
where $\lambda_{\max}(\cdot)$ is the largest eigenvalue of a real symmetric matrix. Note that
$\phi$ is the largest generalised eigenvalue of the matrix pencil associated with the pair $(G, \Gamma)$ and depends on time through the function $\gamma$ in (\ref{UpsGamma}).
By using Lemma~\ref{lem:spec},
it follows from (\ref{UpsGamma})--(\ref{Upsup}) that
\begin{align}
\nonumber
    \phi
    & \<
    \lambda_{\max}
    \left(
        {\begin{bmatrix}
      -\lambda I_n  & \Pi^{-1/2}H\\
      H^\rT\Pi^{-1/2} & -(\alpha + \gamma) I_p
    \end{bmatrix}}
    \right)\\
\label{phipsi}
    & =
    \sqrt{\beta + \frac{1}{4}(\alpha-\lambda + \gamma)^2 }
    -\frac{1}{2}(\alpha +\lambda +  \gamma)
    =:\psi(\gamma),
\end{align}
where
\begin{equation}
\label{bet}
    \beta:= \|\Pi^{-1/2} H\|^2 = \lambda_{\max}(H^\rT \Pi^{-1} H).
\end{equation}
The function $\psi: \mR_+\to \mR$, defined by (\ref{phipsi}), is a nonincreasing convex function which, in view of (\ref{lammu}), satisfies
\begin{equation}
\label{psilim}
  \psi(0)
     =
    \sqrt{\beta + \frac{1}{4}(\alpha-\lambda)^2 }
    -\frac{\alpha +\lambda}{2}
    \>
    \lim_{u\to +\infty} \psi(u) = -\lambda<0.
\end{equation}
Moreover, if $\beta>0$ in (\ref{bet}), then  $\psi$ is strictly decreasing, strictly convex and continuously differentiable.
In view of (\ref{V_Gamma}), (\ref{Upsup}), (\ref{phipsi}), the right-hand side of (\ref{Vdot}) admits an upper bound
\begin{equation}
\label{GB1}
    \dot{V}
    \<
    2\phi
    \begin{bmatrix}
      X^\rT &
      Q^\rT
    \end{bmatrix}
    \Gamma
    \begin{bmatrix}
      X\\
      Q
    \end{bmatrix}
    =
    2\phi V
    \< 2\psi(\gamma) V.
\end{equation}
Application of the Gronwall-Bellman lemma to (\ref{GB1}) yields
\begin{equation}
\label{VV}
    \sqrt{V(t)} \< \sqrt{V(0)}\, \re^{\int_0^t \psi(\gamma(s))\rd s},
    \qquad
    t\> 0.
\end{equation}
Now, if $\psi(0)<0$, which is equivalent to $\alpha > \frac{\beta}{\lambda}$  in (\ref{psilim}), then  (\ref{VV}) leads to $\sqrt{V(t)} \< \sqrt{V(0)}\, \re^{\psi(0) t}$, thus securing an exponentially fast decay for the Lyapunov function (\ref{V_Gamma}) as $t\to +\infty$ regardless of the strong convexity of the backlash set. In this case, the nonlinear system has a unique forced $T$-periodic regime to which all the system trajectories converge at an exponential rate.  However, in the case when $\psi(0)\>0$, the upper bound (\ref{VV}) can lead to an exponential decay for $V$ only due to the strong convexity of the backlash set and effective movement of the backlash output captured in the function $\gamma$ in (\ref{zz}). In combination with (\ref{zz}) and the Jensen inequality, the convexity of the function $\psi$ in (\ref{phipsi})  implies that
\begin{equation}
\label{Jen}
    \psi(\gamma)
    \<
    \frac{1}{2}
    \big(
    \psi
    (\frac{1}{R}|\dot{z}_1|)
    +
    \psi
    (\frac{1}{R}|\dot{z}_2|)
    \big).
\end{equation}
Now, let
\begin{align}
\label{g1}
    \gamma_1
    & :=
    \frac{1}{TR}(\mho-d)_+,\\
\label{ginf}
    \gamma_\infty
    & :=
    \frac{1}{R}
    (
      \|\dot{\eta}_T\|_\infty
  +
  D(C(F\Xi_\infty + E\Theta), \{0\}))
\end{align}
be associated with the bounds (\ref{path}), (\ref{zinf}) of Theorem~\ref{th:path}, and hence, $\gamma_1 \< \gamma_\infty$. Then (\ref{Jen})--(\ref{ginf}) and a local linear upper bound
\begin{equation}
\label{psiup}
  \psi(u) \< \psi(0) - \frac{u}{\gamma_\infty} (\psi(0)-\psi(\gamma_\infty)),
  \qquad
  0\< u \< \gamma_\infty,
\end{equation}
for the nonincreasing convex function $\psi$ over the interval $[0,\gamma_\infty]$ (see Fig.~\ref{fig:psi1})
\begin{figure}[htb]
\begin{center}
\includegraphics[scale=0.5]{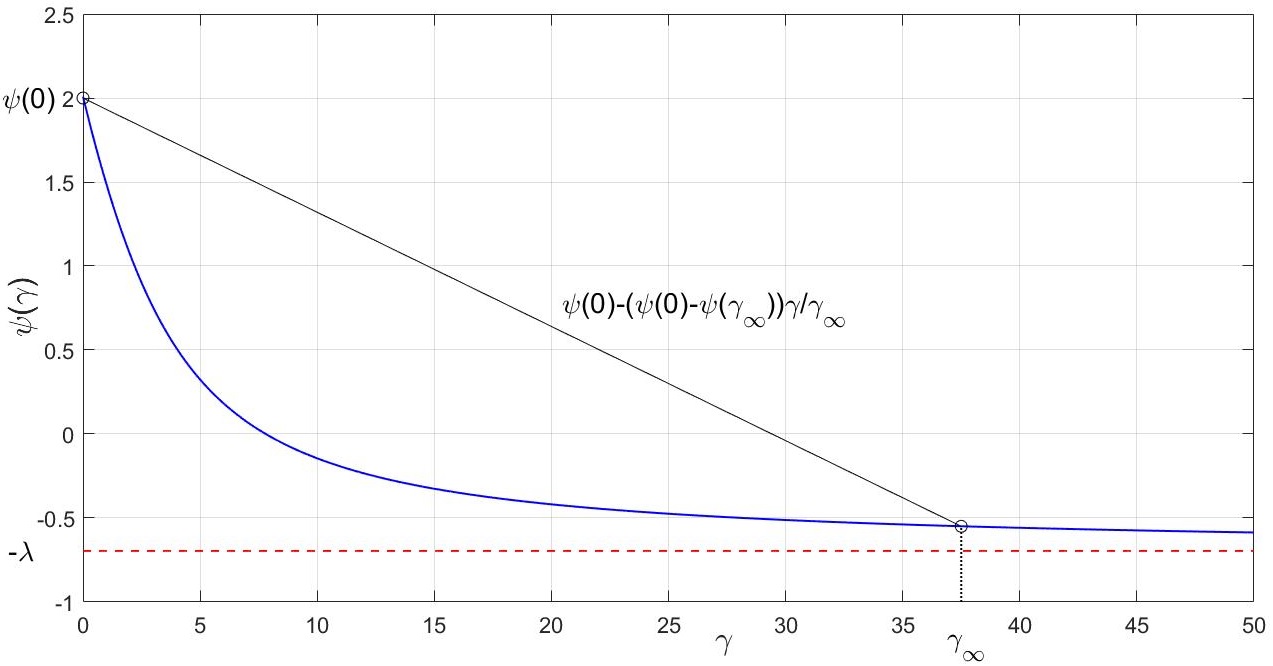}
\caption{A typical graph of the function $\psi$ in (\ref{phipsi}) along with its extreme values (\ref{psilim}) and the linear upper bound (\ref{psiup}) over the interval $[0,\gamma_\infty]$ (in this example, $\psi(0)>0$).}
\label{fig:psi1}
\end{center}
\end{figure}
imply that
\begin{equation}
\label{theta}
    \frac{1}{T}
    \limsup_{\tau\to +\infty}
    \int_{\tau}^{\tau+T}
    \psi(\gamma(t))
    \rd t
    \<
    \psi(0)
    -
    \frac{\gamma_1}{\gamma_\infty}(\psi(0)-\psi(\gamma_\infty)) =:\theta.
\end{equation}
If $\theta<0$, then, in view of (\ref{VV}), it provides an upper bound for the Lyapunov exponent
$$
    \frac{1}{2}
    \limsup_{t\to +\infty}
    \Big(
    \frac{1}{t}
    \ln
    V(t)
    \Big)
    \<
    \limsup_{t\to +\infty}
    \Big(
    \frac{1}{t}
    \int_0^t
    \psi(\gamma(s))
    \rd s
    \Big)
    \<
    \theta,
$$
which quantifies the rate of convergence of the system trajectories with different initial conditions to the unique $T$-periodic trajectory.

Note that $\theta$ in (\ref{theta}) is a convex combination of $\psi(0)$, $\psi(\gamma_\infty)$ and, in view of (\ref{psilim}), can be made negative by an appropriate choice of the $T$-periodic external input $w$. More precisely, this is achieved by increasing $\|\dot{\eta}_T\|_\infty$ in (\ref{ginf}) (which also quantifies the ``amplitude'' of $w$), thus making $\gamma_\infty$ sufficiently large for $\psi(\gamma_\infty)$ to be negative (and, in principle, arbitrarily close to $-\lambda$), and shaping the input $w$ so as to satisfy
$
    \frac{\gamma_1}{\gamma_\infty}
    >
    \frac{\psi(0)}{\psi(0)-\psi(\gamma_\infty)}
$.

The above analysis has employed a local linear upper bound (\ref{psiup}) for the function $\psi$. However, in the case $\beta>0$ in (\ref{bet}), the function $\psi$ is strongly convex over the interval and   admits a subtler quadratic upper bound (see Fig.~\ref{fig:psi1}). Such a refined bound can be combined in (\ref{theta})  with estimates for the second-order moments $\limsup_{\tau\to +\infty}
    \int_{\tau}^{\tau+T}
    |\dot{z}|^2
    \rd t$ of the backlash output.

\section{\bf Conclusion}
\label{sec:conc}

We have considered a class of nonlinear systems, where a linear plant forms a feedback loop with a backlash described by a strongly convex compact. We have discussed a tubular localization  for the trajectories of the system about those of its linearised counterpart without the backlash. This property has been combined with dissipation relations in order to study the establishment of periodic regimes in the system subject to periodic inputs of relatively large amplitudes.
This approach, which is based on enhanced  differential inequalities and was used previously for the Moreau sweeping process  in \cite{V_1997}, is applicable to similar differential inclusions with strongly convex sets. Other directions of research may include extension of these results to system  interconnections with several strongly convex backlashes and also backlash models which take into account the friction and inertial effects.

\appendix
\section*{\bf A. Inequalities  for inward normals}
\label{sec:A}
\renewcommand{\theequation}{A\arabic{equation}}
\setcounter{equation}{0}

For completeness, the following lemma  compares two inequalities for the inward normals to convex and strongly convex sets.

\begin{lem}
\label{lem:diss}
Let $S$ be a closed convex subset of $\mR^p$. Then for any given points
$x,y\in S$,
\begin{equation}
\label{lem1:1}
        (u-v)^\rT
        (x-y)
    \<
    -2
    r
    (|u|+|v|)
\end{equation}
holds for any $u \in N_S(x)$, $v\in N_S(y)$ in the corresponding inward normal cones (\ref{N}), where
\begin{equation}
\label{r}
    r
    :=
    \rho(c,\partial S)
\end{equation}
 is the distance (\ref{rho}) from the midpoint
\begin{equation}
\label{c}
     c:=\frac{1}{2}(x+y)
\end{equation}
(of the line segment with the endpoints $x$, $y$) to the boundary $\d S$.
Moreover, if $S$ is strongly convex with the constant $R$ (as in (\ref{R})),
then
\begin{equation}
\label{lem1:2}
        (u-v)^\rT
        (x-y)
    \<
    -
    \frac{1}{2R}
    (|u|+|v|)|x-y|^2.
\end{equation}
\hfill$\square$
\end{lem}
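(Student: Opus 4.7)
The plan is to prove the two inequalities (\ref{lem1:1}) and (\ref{lem1:2}) separately by exploiting the defining properties at hand: the definition (\ref{N}) of the inward normal cone, the inclusion $\bar{B}_r(c) \subset S$ implicit in (\ref{r}), and the ball-containment characterization (\ref{R}) of strong convexity. In both inequalities the argument is symmetric in the roles of the pairs $(x, u)$ and $(y, v)$, so the two summands $|u|$ and $|v|$ in the bound arise from parallel computations at the two points.

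For the convex inequality (\ref{lem1:1}), I would observe that by the closedness of $S$ and the definition of $r$ in (\ref{r}), the closed ball $\bar{B}_r(c)$ is contained in $S$, so every point $c + re$ with $e \in \mS_p$ lies in $S$. Substituting $w := c + re$ into the defining inequality $u^{\rT}(w - x) \> 0$ (valid because $u \in N_S(x)$ and $w \in S$) yields $u^{\rT}(c - x) + r\, u^{\rT} e \> 0$, and minimizing over the unit vector $e$ gives $u^{\rT}(c - x) \> r |u|$. Since $c - x = (y - x)/2$ by (\ref{c}), this rearranges to $u^{\rT}(x - y) \< -2r |u|$. The symmetric argument at $y$ produces $v^{\rT}(x - y) \> 2r |v|$, and subtracting the two inequalities gives (\ref{lem1:1}).

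For the strong-convexity inequality (\ref{lem1:2}), I would first dispatch the degenerate cases $u = 0$ or $v = 0$, which in particular cover $x$ or $y$ lying in $\interior S$ (since $N_S$ collapses to $\{0\}$ there); in each such case the corresponding summand on the right-hand side is trivially zero. In the nontrivial case $u \ne 0$ we must have $x \in \partial S$, and then $\hat{u} := u/|u| \in N_S(x) \bigcap \mS_p$, so (\ref{R}) supplies $S \subset \bar{B}_R(x + R\hat{u})$. Applied to $y \in S$, this reads $|y - x - R\hat{u}|^2 \< R^2$, which expands to $|y-x|^2 \< 2R\, \hat{u}^{\rT}(y - x)$, or, after multiplying through by $|u|$, to $u^{\rT}(x - y) \< -\frac{|u|}{2R}|x-y|^2$. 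The symmetric inequality at $y$ yields $v^{\rT}(x - y) \> \frac{|v|}{2R}|x-y|^2$, and subtracting the two gives (\ref{lem1:2}).

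I do not foresee a substantive obstacle: both bounds follow by selecting the right test point in $S$ (namely $c + re$ in the convex case and exploiting the ball-centre structure $x + R\hat{u}$ in the strongly convex case) and then applying the inward-normal inequality. The only bookkeeping concern is the case split for boundary versus interior points, which is resolved by $N_S(\cdot) = \{0\}$ on $\interior S$. It is worth noting that (\ref{lem1:2}) is strictly sharper than what one would obtain by plugging a quadratic lower bound for $r$ in (\ref{r}) into (\ref{lem1:1}), which is why the direct appeal to (\ref{R}) is preferable for the strongly convex case.
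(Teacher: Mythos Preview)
Your proposal is correct and follows essentially the same approach as the paper's proof: the same ball inclusion $\overline{B}_r(c)\subset S$ and minimization over a unit vector yield the two one-sided bounds summed to give (\ref{lem1:1}), and the same direct use of the strong-convexity ball $\overline{B}_R(x+R\hat u)\supset S$ at each boundary point (with the interior case handled via $N_S=\{0\}$) gives (\ref{lem1:2}). Your closing remark that the quadratic lower bound on $r$ would only produce a weaker constant matches the paper's observation that this route loses a factor of $2$.
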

\begin{proof}
If the set $S$ is convex, then the closed ball $\overline{B}_r(c)$ of radius (\ref{r})  centred at the midpoint (\ref{c})
of the line segment with the endpoints $x,y\in S$, is contained
by $S$. In combination with the definition  (\ref{N}) of the inward normal cone, the
inclusion $\overline{B}_r(c)\subset S$ implies that for any $u\in N_S(x)$,
\begin{align*}
    0
    & \<
    \inf_{z\in S}
        u^\rT
        (z-x)
        \<
    \min_{z\in \overline{B}_r(c)}
        u^\rT
        (z-x)\\
    & =
    u^\rT(c-x)
    +
    \min_{w\in \overline{B}_r(0)}
        u^\rT w
        =
    \frac{1}{2}
        u^\rT(y-x)
    -r|u|,
\end{align*}
and hence,
\begin{equation}
\label{ineq1}
        u^\rT(x-y)
    \<
    -2r|u|.
\end{equation}
By a similar reasoning, for any $v \in N_S(y)$,
\begin{equation}
\label{ineq2}
        v^\rT(y-x)
    \<
    -2r|v|.
\end{equation}
By taking the sum of (\ref{ineq1}) and (\ref{ineq2}) and recalling
(\ref{r}), (\ref{c}), we arrive at
(\ref{lem1:1}).
Now, suppose the set $S$ is strongly convex. In this case,
(\ref{lem1:2}) cannot be obtained directly from (\ref{lem1:1}) by using the lower bound \cite{FO_1981}
\begin{equation}
\label{rR}
    r
    \>
    R
    -
    \sqrt{R^2-\frac{1}{4}|x-y|^2}
    \>
    \frac{1}{8R}
    |x-y|^2
\end{equation}
for the radius (\ref{r}) in terms of the strong convexity constant  $R$ of the set $S$,  because (\ref{rR}) leads to a more conservative factor $\frac{1}{4}$ than $\frac{1}{2}$ in (\ref{lem1:2}).
In order to prove (\ref{lem1:2}), suppose $x \in \partial S$ and $u \in
N_S(x)\setminus \{0\}$. Then, by using the unit inward
normal
\begin{equation}
\label{nu}
    \vartheta
    :=
    \frac{1}{|u|}u,
\end{equation}
it follows from (\ref{R}), applied to the set $S$, that
$
    S
    \subset
    \overline{B}_R(x+R\vartheta)
$. This inclusion is equivalent to
$
    |x+R\vartheta -y| \<  R
$
for all $y \in S$, and hence,
\begin{equation}
\label{ineq0}
    0\> |x+R\vartheta -y|^2 -  R^2 = |x-y|^2 +2R\vartheta^\rT (x-y),
\end{equation}
where use has been made of the property $|\vartheta| = 1$ which follows from (\ref{nu}). Therefore, multiplication of (\ref{ineq0}) by $|u|$ leads to
\begin{equation}
\label{2R1}
    2R
        u^\rT(x-y)
    \<
    -
    |u|
    |x-y|^2.
\end{equation}
Since the cone $N_S(x)$ reduces to the singleton $\{0\}$ for any interior
point $x\in \interior S$, the inequality (\ref{2R1}) remains valid for any
$x, y \in S$ and any $u\in N_S(x)$. Similarly, if $v \in N_S(y)$,
then
\begin{equation}
\label{2R2}
    2R
        v^\rT(y-x)
    \<
    -
    |v|
    |y-x|^2.
\end{equation}
The sum of (\ref{2R1}), (\ref{2R2}) yields (\ref{lem1:2}),
thus completing the proof.
\end{proof}

\section*{\bf B. A spectral bound for a class of symmetric matrices}
\label{sec:B}
\renewcommand{\theequation}{B\arabic{equation}}
\setcounter{equation}{0}

\begin{lem}
\label{lem:spec}
Let $a,b\in \mR$ and $g\in \mR^{n\x p}$. Then the largest eigenvalue of the real symmetric matrix
\begin{equation}
\label{M}
    M
    :=
    \begin{bmatrix}
      a I_n & g\\
      g^\rT & b I_p
    \end{bmatrix}
\end{equation}
is given by
\begin{equation}
\label{spec}
    \lambda_{\max}(M)
    =
    \frac{a+b}{2} + \sqrt{\|g\|^2 + \Big(\frac{a-b}{2}\Big)^2}.
\end{equation}
\hfill$\square$
\end{lem}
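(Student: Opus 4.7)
The plan is to diagonalise the off-diagonal block via the singular value decomposition and reduce $M$ to a direct sum of explicit $2\times 2$ blocks. Let $g = U\Sigma V^\rT$ be the SVD, with $U \in \mR^{n\x n}$, $V\in \mR^{p\x p}$ orthogonal and $\Sigma \in \mR^{n\x p}$ carrying the singular values $\sigma_1\>\sigma_2\>\ldots\>\sigma_r\>0$ on its main diagonal, where $r := \min(n,p)$. Conjugating $M$ in (\ref{M}) by $\blockdiag(U,V)$ is an orthogonal similarity, so it preserves the spectrum and replaces $g$ by $\Sigma$ while leaving the scalar blocks $aI_n$ and $bI_p$ unchanged.

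Next, I would apply a symmetric permutation of rows and columns to the resulting matrix $\bigl[\begin{smallmatrix} aI_n & \Sigma\\ \Sigma^\rT & bI_p\end{smallmatrix}\bigr]$ that interleaves the indices $1,\ldots,r$ of the two blocks. This produces a block-diagonal matrix consisting of $r$ copies of $M_k := \bigl[\begin{smallmatrix} a & \sigma_k\\ \sigma_k & b\end{smallmatrix}\bigr]$ for $k=1,\ldots,r$, together with $|n-p|$ scalar blocks, each equal to $a$ (if $n>p$) or $b$ (if $p>n$). A direct calculation gives the eigenvalues of $M_k$ as $\frac{a+b}{2}\pm\sqrt{\sigma_k^2 + \bigl(\frac{a-b}{2}\bigr)^2}$.

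Since $\sqrt{\sigma_k^2 + ((a-b)/2)^2}$ is monotone nondecreasing in $\sigma_k$, the largest eigenvalue across the $2\x 2$ blocks is attained at $k=1$ and equals the right-hand side of (\ref{spec}) with $\sigma_1 = \|g\|$. It remains to check that this value dominates the scalar-block eigenvalues $a$ or $b$. This follows from the trivial bound $\sqrt{\sigma_1^2 + ((a-b)/2)^2}\>|a-b|/2$, giving $\frac{a+b}{2}+\sqrt{\sigma_1^2 + ((a-b)/2)^2}\>\max(a,b)$.

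I do not anticipate any real obstacle: the SVD reduction is standard, and the only minor bookkeeping is specifying the permutation that groups the $k$th row of the upper block with the $k$th row of the lower block. An equivalent route, avoiding permutations entirely, is to test the Rayleigh quotient $\xi^\rT M\xi$ on vectors $\xi = (\alpha u_1^\rT, \beta v_1^\rT)^\rT$, where $u_1, v_1$ are the leading left and right singular vectors of $g$, and optimise over $(\alpha,\beta)\in \mS_2$; the orthogonal complements of $\mathrm{span}\{u_1\}$ and $\mathrm{span}\{v_1\}$ give an invariant subspace on which a recursive argument (or the minimax principle) yields the same bound.
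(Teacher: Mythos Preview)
Your proof is correct. The paper takes a different route: rather than diagonalising via the SVD, it computes the characteristic polynomial of $M$ directly using the Schur complement, obtaining $\det(\lambda I_{n+p}-M)=(\lambda-a)^{n-p}\det(\sigma I_p - g^\rT g)$ with $\sigma:=(\lambda-a)(\lambda-b)$, so that the eigenvalues of $M$ exceeding $\max(a,b)$ are exactly the roots of the quadratic $(\lambda-a)(\lambda-b)=\sigma$ as $\sigma$ ranges over the eigenvalues of $g^\rT g$. Both arguments therefore land on the same quadratic and the same maximiser $\sigma=\|g\|^2$. Your SVD reduction is arguably cleaner in that it delivers the full spectrum (and the eigenvectors) at once and handles the case $g=0$ uniformly, whereas the paper treats $g=0$ separately and then restricts attention to $\lambda>\max(a,b)$ so that the Schur-complement factorisation is well-defined; on the other hand, the paper's route avoids any permutation bookkeeping.
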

\begin{proof}
If $g=0$ in (\ref{M}), then $\lambda_{\max}(M) = \max(a,b) = \frac{a+b}{2} + \frac{|a-b|}{2}$, and (\ref{spec}) holds in this case. Now, let $g\ne 0$, in which case, $\lambda_{\max}(M)>\max(a,b)$. For any $\lambda >\max(a,b)$ (and hence, $\lambda \ne a$ and $\lambda \ne b$), the characteristic polynomial of the matrix (\ref{M}) admits the Schur complement decomposition  \cite{HJ_2007}
\begin{align}
\nonumber
    \det(\lambda I_{n+p} - M)
    & =
    \det
    \begin{bmatrix}
        (\lambda-a)I_n & - g\\
        - g^\rT & (\lambda - b)I_p
    \end{bmatrix}\\
\nonumber
    & =
    (\lambda-a)^n
    \det \Big((\lambda - b)I_p - \frac{1}{\lambda-a}g^\rT g\Big)\\
\nonumber
    & =
    (\lambda-a)^{n-p}
    \det (\sigma I_p - g^\rT g)    \\
\label{char}
    & =
    (\lambda-b)^{p-n}
    \det (\sigma I_n - g g^\rT),
\end{align}
where
\begin{equation}
\label{lamsig}
    \sigma:= (\lambda-a)(\lambda-b) =
    \lambda^2 - (a+b)\lambda + ab.
\end{equation}
In view of (\ref{char}), every such eigenvalue $\lambda$ of the matrix $M$ can be represented as a solution
\begin{equation}
\label{sigh}
    \lambda = \frac{a+b}{2} \pm \sqrt{\sigma + \Big(\frac{a-b}{2}\Big)^2}
\end{equation}
of the quadratic equation (\ref{lamsig}) for a common eigenvalue $\sigma$ of the matrices $g^\rT g$ and $gg^\rT$, whose spectra (consisting of the squared singular values of $g$) can differ from each other only by zeros \cite{HJ_2007}. The maximum in (\ref{sigh})  is achieved when the terms are summed and $\sigma=\lambda_{\max}(g^\rT g) = \lambda_{\max}(g g^\rT) = \|g\|^2$,
which establishes (\ref{spec}).
\end{proof}

Note that the right-hand side of (\ref{spec}) depends on $a$,  $b$ in a convex fashion, inheriting this property from the function $\lambda_{\max}$  for
real symmetric (or complex Hermitian) matrices.

\end{document}